\newcommand{\xv}{{\boldsymbol {x}}}
\newcommand{\vv}{{\boldsymbol {v}}}
\newcommand{\va}{{\boldsymbol {a}}}
\newcommand{\vd}{{\boldsymbol {d}}}
\newcommand{\VM}{{\it VM}}
\newcommand{\CV}{{\it CV}}
\newcommand{\UB}{{\it UB}}
\newcommand{\M}{{\mathcal{M}}}
\newcommandx{\unsure}[2][1=]{\todo[linecolor=red,backgroundcolor=red!25,bordercolor=red,#1]{#2}}
\newcommandx{\change}[2][1=]{\todo[linecolor=blue,backgroundcolor=blue!25,bordercolor=blue,#1]{#2}}
\newcommandx{\info}[2][1=]{\todo[linecolor=OliveGreen,backgroundcolor=OliveGreen!25,bordercolor=OliveGreen,#1]{#2}}
\newcommandx{\improvement}[2][1=]{\todo[linecolor=Plum,backgroundcolor=Plum!25,bordercolor=Plum,#1]{#2}}
\newcommandx{\thiswillnotshow}[2][1=]{\todo[disable,#1]{#2}}
\DeclareMathOperator*{\argmin}{\arg\min}
\begin{document}

\title{Attacking the V:\\ 
On the Resiliency of Adaptive-Horizon MPC}

%% \title{LEREC: Level-Based Receding-Horizon Control for Attack-Resilient Formations}
%
\titlerunning{Attacking flight formations using global controller}  % abbreviated title (for running head)
%                                     also used for the TOC unless
%                                     \toctitle is used
%
%\author{}
\author{Scott A. Smolka\inst{1} \and Ashish Tiwari\inst{2} \and Lukas Esterle\inst{3} \and Anna Lukina\inst{3} \\ Junxing Yang\inst{1} \and Radu Grosu\inst{1,3}}
%
%% \authorrunning{omitted for review}%Lukina, Esterle, Hirsch, Bartocci, Yang, Tiwari, Smolka, Grosu} % abbreviated author list (for running head)
%
%%%% list of authors for the TOC (use if author list has to be modified)
\tocauthor{}
%
%\institute{}
\institute{ Department of Computer Science, Stony Brook University, USA \\
    \and SRI International, USA \\
    \and Cyber-Physical Systems Group, Technische Universit\"at Wien, Austria
	}

\maketitle        % typeset the title of the contribution

\begin{abstract}
%% The abstract should summarize the contents of the paper
%% using at least 70 and at most 150 words. It will be set in 9-point
%% font size and be inset 1.0 cm from the right and left margins.
%% There will be two blank lines before and after the Abstract. \dots
We introduce the concept of a \emph{V-formation game} between a controller and an attacker, where controller's goal is to maneuver the plant (a simple model of flocking dynamics) into a V-formation, and the goal of the attacker is to prevent the controller from doing so.  
%% The controller can attain its goal by minimizing a certain flock-wide fitness 
%% function $J$, which is (almost) zero exactly when V-formation has been reached.  
%% Conversely, the attacker seeks to maximize $J$. 
%% We formalize V-formation games in terms of a Markov Decision Process (MDP) in which the controller and attacker jointly determine the transition probabilities.
Controllers in V-formation games utilize a new formulation of model-predictive control we call \emph{Adaptive-Horizon MPC} (AMPC), giving them extraordinary power: we prove that under certain controllability assumptions, an AMPC controller is able to attain V-formation with probability~1.
%% find a sequence of control actions (flock-wide accelerations) that brings the MDP to %% a V-formation goal state with probability one.

We define several classes of attackers, including those that in one move can remove
%%  a small number 
$R$ birds from the flock, or introduce random displacement
%% (perturbation)
into flock dynamics. 
%%again by selecting a small number of victim agents.  
We consider both \emph{naive attackers}, whose strategies are purely probabilistic, and \emph{AMPC-enabled attackers}, putting them on par strategically with the controllers.
%% in V-formation games.
While an AMPC-enabled controller is expected to win every game with probability~1, in practice, it is \emph{resource-constrained}: its maximum prediction horizon and the maximum number of game execution steps are fixed.  Under these conditions, an attacker has a much better chance of winning a V-formation game.

Our extensive performance evaluation of V-formation games uses statistical model checking to estimate the probability
%% by which 
an attacker can thwart the controller.  Our results show that for the bird-removal game with $R\,{=}\,1$, the controller almost always wins (restores the flock to a V-formation). For $R\,{=}\,2$, the game outcome critically depends on which two birds are removed.
%% : as long as the two removed birds are not adjacent to one another, the controller 
%% wins (is resilient).  
For the displacement game, our results again demonstrate that an intelligent attacker, i.e.~one that uses AMPC in this case, significantly outperforms its naive counterpart
%% that simply and uniformly at random
that randomly executes its attack.
%% \keywords{computational geometry, graph theory, Hamilton cycles}
\end{abstract}

	\section{Introduction}
\label{sec:intro}

%% \emph{Cyber-Physical Systems} (CPSs), which integrate computation, networking, and 
%% physical processes, are tightly coupled with our physical world.  Consequently, a
%% cyber-attack on a CPS can have serious \emph{physical} consequences, in particular to our 
%% physical infrastructure.  Examples include a 2015 malware attack on Ukraine’s power 
%% grid~\cite{ukraine}; the 2009 Stuxnet worm, which led to the destruction of nearly one 
%% fifth of Iran's nuclear centrifuges~\cite{stuxnet}; and a 2016 demonstration of the 
%% remote hacking of a Tesla Model~S~\cite{tesla}.  In contrast, traditional cyber-attacks 
%% have been more purely cyber-oriented, focusing for example on identity theft, financial 
%% gain, and the spread of propaganda.

Many Cyber-Physical Systems (CPSs) are highly distributed in nature, comprising a multitude of computing agents that can collectively exhibit \emph{emergent behavior}. A compelling example of such a distributed CPS is the \emph{drone swarm}, which are beginning to see increasing application in battlefield surveillance and reconnaisance~\cite{droneswarm}.  The emergent behavior they exhibit is that of \emph{flight formation}.

A particularly interesting form of flight formation is \emph{V-formation}, especially for long-range missions where energy conservation is key.  V-formation is emblematic of migratory birds such as Canada geese, where a bird flying in the \emph{upwash region} of the bird in front of it can enjoy significant energy savings.  The V-formation also offers a \emph{clear view} benefit, as no bird's field of vision is obstructed by another bird in the formation.  Because of the V-formation's intrinsic appeal, it is important to quantify the resiliency of the control algorithms underlying this class of multi-agent CPSs to various kinds of cyber-attacks.  This question provides the motivation for the investigation put forth in this paper.

%% From Ashish: In my opinion, here are the main messages of the paper:
%% 1. AMPC > MPC.  I.e. Adaptively changing length of horizon is a new and powerful idea to improve performance and resilience of MPC.
%% 2. Traditional feedback control is, by design,  resilient against noise, and consequently against certain kinds of attacks, but it may not 
%%    be resilient against smart attacks.  Adaptive control helps to guard against a larger class of attacks, but it can still falter due to 
%%    limited resources.
%% 3. Stochastic model checking is a promising approach to evaluate resilience of CPSs against classes of attacks.

\paragraph{Problem Statement and Summary of Results.}  We introduce the concept of \emph{V-formation games}, where the goal of the controller is to maneuver the plant (a simple model of flocking dynamics) into a V-formation, and the goal of the attacker is to prevent the controller from doing so.
%% The specific class of C-A games we consider is that of \emph{V-formation games}, where the 
%% controller can attain its goal by minimizing a certain flock-wide fitness function $J$, 
%% which is (almost) zero exactly when V-formation has been reached.  Conversely, the 
%% attacker seeks to maximize $J$.  We formalize V-formation games in terms of a Markov 
%% Decision Process (MDP) in which the controller and attacker jointly determine the 
%% transition probabilities.
Controllers in V-formation games utilize a new formulation of model-predictive control we call \emph{Adaptive Receding-Horizon MPC} (AMPC), giving them extraordinary power: we prove that under certain controllability conditions, an AMPC controller can attain V-formation with probability~1.

We define several classes of attackers, including those that in one move can remove a small number $R$ of birds from the flock, or introduce random displacement (perturbation) into the flock dynamics, again by selecting a small number of victim agents.  We consider both \emph{naive attackers}, whose strategies are purely probabilistic, and \emph{AMPC-enabled attackers}, putting them on par strategically with the controllers.  The architecture of a V-formation game with an AMPC-enabled attacker is shown in Figure~\ref{fig:ampc}.
While an AMPC-enabled controller is expected to win every game with probability~1, in practice, it is \emph{resource-constrained}: its maximum prediction horizon and the maximum number of game execution steps are fixed in advance.  Under these conditions, an attacker has a much better chance of winning a V-formation game.

AMPC is a key contribution of the work presented in this paper.  Traditional MPC uses a fixed finite \emph{prediction horizon} to determine the optimal control action.  Hence, it may get stuck in local minima. The AMPC procedure chooses it dynamically.  Thus, AMPC can adapt to the severity of the action played by its adversary by choosing its own horizon accordingly. The AMPC procedure is inspired by an adaptive optimization procedure recently presented in~\cite{lukina2016arxiv}.

Our extensive performance evaluation of V-formation games uses statistical model checking to estimate the probability that an attacker can thwart the controller.  Our results show that for the bird-removal game with $R\,{=}\,1$, the controller almost always wins (restores the flock to a V-formation). For $R\,{=}\,2$, the game outcome critically depends on which two birds are removed.
%% : as long as the two removed birds are not adjacent to one another, the controller 
%% wins (is resilient).  
For the displacement game, our results again demonstrate that an intelligent attacker, i.e.~one that uses AMPC in this case, significantly outperforms its naive counterpart
%% that simply and uniformly at random
that randomly executes its attack.

Traditional feedback control is, by design, resilient to noise, and also certain kinds of attacks; as our results show, however, it may not be resilient against smart attacks.  Adaptive-horizon control helps to guard against a larger class of attacks, but it can still falter due to limited resources.
Our results also demonstrate that statistical model checking represents a promising approach toward the evaluation CPS resilience against a wide range of attacks.

The rest of the paper is organized as follows.  Section~\ref{sec:background} introduces our dynamic model of V-formation in a flock of autonomous agents, and Section~\ref{sec:problem} defines our controller-attacker stochastic games. Section~\ref{sec:ampc} presents AMPC, and Section~\ref{sec:sgv} shows how AMPC is used in the V-formation games we consider. Section~\ref{sec:results} gives a critical analysis of our results, and Section~\ref{sec:related} discusses related work. Section~\ref{sec:conclusion} offers our concluding remarks and directions for future work.

    \section{V-Formation}
\label{sec:background}

We consider the problem of bringing a flock of $B$ birds from a random initial configuration to an organized V-formation.  Recently, Lukina et al.~\cite{lukina2016arxiv} have modeled this problem as a deterministic Markov Decision Process (MDP) $\mathcal{M}$, where the goal was to generate actions that caused $\mathcal{M}$ to reach a desired state. 

In our case, $\mathcal{M}$ is an MDP.  The \emph{state} of each bird in the flock is modeled using 4 variables: a 2-dimensional vector $\xv$ denoting the position of the bird in a 2D space, and a 2-dimensional vector $\vv$ denoting the velocity of the bird. Thus, the state space of $\M$ is $\mathbb{R}^{4B}$ representing a flock of $B$ birds.  The \emph{control actions} of each bird are 2-dimensional accelerations $\va$ and 2-dimensional position displacements $\vd$ (see discussion of $\va$ and $\vd$ below). Both are random variables.

Let $\xv_i(t),\vv_i(t),\va_i(t)$, and $\vd_i(t)$ denote the position, velocity, acceleration, and displacement of the $i$-th bird at time $t$, respectively.
%
%Here, each state of $\mathcal{M}$, at discrete time $t$, are of the form $(\xv_i(t),\vv_i(t))$, $1\,{\leqslant}\,i \,{\leqslant}\,N$, where $\xv_i(t)$ and $\vv_i(t)$ are vectors with $N$ elements (for an $N$-bird flock). Each element is 2-dimensional, representing positions and velocities, respectively. 
%
Then, the transition relation of the MDP $\mathcal{M}$ is given as follows:
\vspace*{-1mm}\begin{eqnarray}
\label{eq:trans}
 \xv_i(t + 1) &=& \xv_i(t) + \vv_i(t+1)\label{eq:x} + \vd_i(t) \qquad \forall~i\,{\in}\,\{1,\ldots,B\}, \nonumber\\
 \vv_i(t + 1) &=& \vv_i(t) + \va_i(t).\label{eq:v} %\\[-6mm]
\end{eqnarray}
Once the current acceleration and displacement are sampled, the next state is uniquely determined by~(\ref{eq:trans}) from the current state in $\M$~\cite{lukina2016arxiv}.

The problem of whether we can go from a random flock to a V-formation is a reachability question.  The reachability goal is the set of states representing a V-formation. A key assumption in~\cite{lukina2016arxiv} was that the reachability goal can be specified using a fitness function $J$,   
which assigns a non-negative real (fitness) value to each state in $\M$. 
%Moreover, $J$ was defined so that $J(s)\,{<}\,{\varphi}$, for a small $\varphi$, for exactly those states $s$ that correspond to a V-formation. {This is mentioned after eq.2} 
%Each state of $\mathcal{M}$ is represented by a fitness function reflecting essential motives for birds to fly in a V-formation in the first place.

The fitness of a state was determined by the following three terms: 
%% using three terms that measured if: (a)~birds have a clear visual field not blocked by any bird in front of them (\emph{clear view}); (b)~birds have the same velocity (\emph{velocity matching}); and
%% (c)~birds are located in space so that they are positioned in the upwash region of the bird(s) in front of them (\emph{upwash benefit}). 
\begin{itemize}
	\vspace*{-1.4mm}\item \emph{Clear View} ($\CV$). A bird's visual field is a cone with 
    angle $\theta$ that can be blocked by the wings of other birds. 
    The clear-view metric is defined by accumulating the percentage of a bird's visual 
    field that is blocked by other birds. The CV for the flock is the sum of the clear-view
    metric of all birds. The minimum value of $\CV$ 
    is $\CV^*{=}\,0$, and this value is attained in a perfect V-formation where all birds have clear view.
    
	\vspace*{1mm}\item \emph{Velocity Matching} ($\VM$). $\VM$ is defined as the  
	difference between the velocity of a given bird and all other birds, 
    summed up over all birds in the flock. The minimum
    value for $\VM$ is $\VM^*{=}\,0$, and this value is attained in a perfect
    V-formation where all birds have the same 
    velocity.
    
	\vspace*{1mm}\item \emph{Upwash Benefit} ($\UB$). The trailing upwash is 
    generated near the wingtips of a bird, while downwash is generated near 
    the center of a bird.  An upwash measure $um$ is defined on the 2D space using a 
    Gaussian-like model that peaks at the appropriate upwash and downwash regions. 
    %{Jesse: I commented this out as $um$ could be larger than 1 due to downwash} By design, the maximum $um$ for a bird is~1. 
    For bird~$i$ with upwash $um_i$, the upwash-benefit metric $\UB_i$ is defined as
    $1\,{-}um_i$, and $\UB$ for the flock is the sum of all $\UB_i$ $\forall~i\,{\in}\,\{1,\ldots,B\}$.
    The upwash benefit $\UB$ of a flock in V-formation is $\UB^*\,{=}\,1$, as all birds, except for the leader, have 
    minimum upwash-benefit metric ($\UB_i=0, um_i=1$),
    while the leader has upwash-benefit metric of $1$ ($\UB_i=1, um_i=0$).
\end{itemize}

\noindent{}Let $s=\{\xv_i, \vv_i\}_{i=1}^B$ be a state of a flock with $B$ birds. Given the above metrics, the overall fitness (cost) metric $J$ 
is of a sum-of-squares combination of $\VM$, $\CV$, and $\UB$ defined as follows:
\vspace*{-1mm}
\begin{align}
J(s) = (\CV(s)-\CV^*)^2 + (\VM(s)-\VM^*)^2 +(\UB(s)-\UB^*)^2.
\label{eq:fitness}
\end{align}
A state $s^{*}$ is considered to be a V-formation whenever $J(s^{*})\,{<}\,\varphi$, for a certain small threshold $\varphi$.

Given the above flocking model, the goal is to bring the flock from any configuration to a V-formation. Recall that we had two sets of control variables: accelerations $\vec{a}$ and displacements $\vec{d}$ for each bird of the flock.
We consider the scenario where the accelerations are under the control of one agent (the controller), and the displacements (position perturbations) are under the control of a second malicious agent (the attacker). This partition of the actions of the MDP into disjoint sets gives rise to a stochastic game on an MDP, which is described next.  
    \section{Controller-Attacker Games: Problem Definition}
\label{sec:problem}

%\todo[inline]{add a definition on what a game is! --> Russell Norvig!}
We are interested in games between a controller and an attacker, where the goal of the controller is to take the system to a desired set of states, and the goal of the attacker is to keep the system outside these states.  We formulate our problem by using a Markov Decision Processes (MDP) such that the controller and the attacker jointly determine the transition probabilities.

\begin{definition}A \textbf{Markov Decision Process (MDP)} $\M$ is a tuple $(S,A,T,J)$ consisting of: (1)~a set $S$ of states, (2)~a set $A$ of actions, (3)~a function $T: S\,{\times}\,A\,{\times}\,S\,{\mapsto}\,[0,1]$, where $T(s,a,s^\prime)$ is the probability of transitioning from state $s$ to state $s'$ under action $a$, and (4)~a function $J:S\,{\mapsto}\,\mathbb{R}$, where $J(s)$ is the reward (fitness) associated to state $s$.
%(with an initial state $s_0$), a set of actions $A$, a transition model $T$, and a cost function $J$. An MDP is called \textbf{deterministic} if $\forall s\in S,a\in A$ transition model $T$, such that $S\,{\times}\,{A}\,{\rightarrow}\,{S}$, specifies a unique state.
\end{definition}

\begin{figure}[t]
	\centering
		\begin{tikzpicture}[scale=1, every node/.style={transform shape},auto, node distance=2cm,>=latex', cross/.style={path 
		picture={ 
			\draw[black]
			(path picture bounding box.south east) -- (path picture bounding 
			box.north west) (path picture bounding box.south west) -- (path picture 
			bounding box.north east);
		}}]
		
		%%%%%%%%%%%%%%%% blocks
		\tikzstyle{arrow} = [thick,draw=black,->,>=stealth]
		\tikzstyle{proc} = [rectangle, rounded corners, minimum width=1cm, minimum 
		height=1cm,text centered, draw=black]
		\tikzstyle{output} = [coordinate]
		
		\node (start) [proc, text width=4.2cm] {\textbf{Controller}\\ \scriptsize{$\va(t) = {\it AMPC}\left(f, \xv(t), \vv(t), J\right)$}};
		\node (c1) [proc, text width=4.2cm] at (0,-1.3) {\textbf{Advanced Attacker}\\ \scriptsize{$\vd(t) = {\it AMPC}\left(g, \xv(t), \vv(t), -J\right)$}};
		\node (c2) [proc, text width=5.1cm, minimum height=2.3cm] at (6,-0.65) {\textbf{Flock}\\ \vspace*{0.1cm} \scriptsize{$\vv(t+1) = \vv(t) + \va(t)$}\\ \scriptsize{$\xv(t+1) = \xv(t) + \vv(t+1) + \vd(t)$}};
		\node (c3) [output,right=3.5em of start] {};
		\node (c4) [output,right=3.5em of c1] {};
		\node (c5) [output,right=1em of c2] {};
		\node (c6) [output,below=1em of c1] {};
		\node (c7) [output,left=1em of c1] {};
		\node (c8) [output,left=1em of start] {};
		
		%%%%%%%%%%%%%%%%%%% paths
		\draw[arrow] (start)-- node{\scriptsize{$\va(t)$}} (c3);
		\draw[arrow] (c1)-- node{\scriptsize{$\vd(t)$}} (c4);
		\draw[-,thick] (c2)--(c5) {};
		\draw[-,thick,near end] (c5)|- node{\scriptsize{$\xv(t+1),\vv(t+1)$}} (c6);
		\draw[-,thick] (c6)-|(c7) {};
		\draw[-,thick] (c7)-|(c8) {};
		\draw[arrow] (c7)--(c1) {};
		\draw[arrow] (c8)--(start) {};
		
\end{tikzpicture}
\vspace*{-2mm}
\caption{Controller-Attacker Game Architecture}
\label{fig:ampc}
     \vspace*{-3mm}
\end{figure}
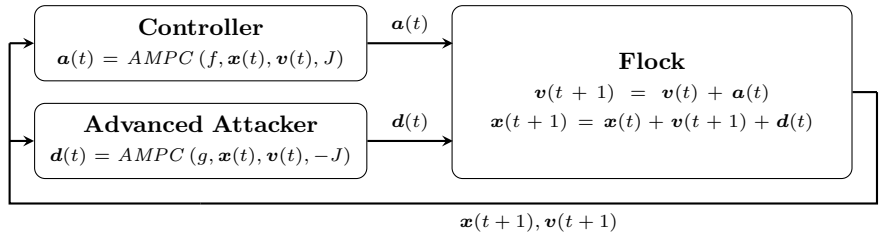

In a \textbf{stochastic game}~\cite{shapley1953stochastic}, the transition probability from state $s$ to state $s^\prime$ is controlled jointly by two players, a controller and an attacker in our case. To view an MDP as a stochastic game, we assume that the set of actions $A$ is given as a product $C\,{\times}\,D$, where the controller chooses the $C$-component of an action $\va$ and the attacker chooses the $D$-component of $\va$. We assume that the game is played in parallel by the controller and the attacker; i.e., they both take the state
$s(t)\in {S}$ of the system at time $t$,  
compute their respective actions $c(t)\in C$ and $d(t)\in D$, 
and then use the composed action $(c(t),d(t))$
to determine the
next state $s(t+1)\in {S}$ of the system (based on the transition function $T$).
% ASHISH: I commented out the text that was flock-specific (below), and replaced it by generic text above.
%$(\xv(t),\vv(t))\in{S}$ at time $t$, compute the accelerations $\va(t)$ and displacements $\vd(t)$ and pass them to the flock, which than computes the next state $(\xv(t+1),\vv(t+1))$.  

We consider {\em{randomized strategies}} for both the controller and the attacker. A randomized strategy is a mapping taking every state $s$ to a probability distribution $P(a\,{|}\,s)$ over the (available) actions. 
Once we fix a randomized strategy for the controller, and a randomized strategy for the attacker, the MDP reduces to a Markov chain on the state space $S$. Thus, 
the controller and attacker jointly fix the probability of transitioning from a state $s$ to a state $s^\prime$. 

In this paper, we consider {\em{reachability games}} only. In other words, we are given a set $G$ of ``good'' states and the goal of the controller is to reach a state in $G$. Let $s_0\,{\rightarrow}\,s_1\,{\rightarrow}\,s_2\,{\rightarrow}\,\cdots$ be a sequence of states (a run of the system). The controller wins on this run if $\exists{i}: s_i \in G$, and the attacker wins otherwise.

%\vspace*{-1.5mm}\begin{remark}
We are interested in discrete-time continuous-space dynamical systems.  Formally, the state space $S$ is $\mathbb{R}^n$ and the action space $A$ is in $\mathbb{R}^m$. In the bird flocking example, $n\,{=}\,m\,{=}\,4\,{\cdot}\,B$, where $B$ is the number of birds. We have four state variables and four action variables, respectively for each bird. They represent the $x$- and the $y$-components of the position $\xv_i$, velocity $\vv_i$, acceleration $\va_i$, and displacement $\vd_i$ of each bird $i$, respectively.
%\end{remark}

%\vspace*{-2mm}\begin{remark}
A classical problem in the study of games pertains to determining the existence of an optimal winning strategy (e.g.~a Nash equilibrium) for a player.  We are {\em{not}} concerned with such problems in this paper.  Due to the uncountably many states in the state- and action-space, solving such problems for our games of interest is extremely challenging. Instead, we focus on the problem of determining the likely winner of a game where the strategy of the two players is fixed.  Since we consider randomized strategies, determining the likely winner is a statistical model checking problem.  In other words, we want to evaluate the resilience of certain controllers under certain attack models.
%\end{remark}

%\begin{definition}
%First introduced by~\cite{shapley1953stochastic}, a \textbf{stochastic game} is a dynamic process proceeding from state (position) to state according to transition probabilities controlled jointly by two players.
%\end{definition}

We are now ready to formally define the problem we would like to solve.
\begin{definition}[Stochastic-game verification problem]
Let $\M\,{=}\,(S,A,T,J)$ be an MDP, where 
$A\,{=}\,C\,{\times}\,D$, and randomized strategies
$\sigma_C: S\,{\mapsto}\,PD(C)$ and
$\sigma_D: S\,{\mapsto}\,PD(D)$ mapping states $S$ to probability distributions
over $C$ and $D$. 
The \emph{stochastic-game verification problem} is to
determine the probability of reaching a state in $G\,{\subset}\,S$
in $m$ steps, for a given $m$, starting from an initial state (taken from a given probability distribution) in the underlying Markov chain induced by
strategies $\sigma_C,\sigma_D$ on the MDP $\M$.
\end{definition}

%\begin{remark}
Let us specify the randomized strategies. For a strategy
$\sigma$, we assume that we are
given a randomized algorithm that takes a state $s$ and
returns an action consistent with the probability distribution $\sigma(s)$.
%\end{remark}

%\begin{remark}
Our main interest here is in evaluating the resilience of a 
specific controller algorithm $\sigma_C$. The key assumption that
the controller and the attacker algorithms make is 
the existence of a {\emph{fitness function}}
$J: S\,{\mapsto}\,\mathbb{R}^{+}$ such that 
\begin{eqnarray*}
G & := & \left\{ s \mid J(s) < \varphi \mbox{ for some very small $\varphi > 0$} \right\}.
\end{eqnarray*}
Given such a fitness metric $J$, the controller works by minimizing the fitness of states reachable, in one or more steps, as it is done in model-predictive control (MPC).  Since the fitness function is highly nonlinear,
the controller uses an optimization procedure based on randomization to search for a minimum. Hence, our controller is a randomized procedure. One possible attack strategy we consider (for an advanced attacker) is based on the fitness function as well: the attacker tries to maximize the fitness of reachable states. 
%\end{remark}

A key contribution of our work is an adaptive MPC procedure called AMPC. Recall that traditional MPC uses a fixed finite horizon to determine the best control action. The AMPC procedure chooses it dynamically. Thus, AMPC can adapt to the severity of the action played by its adversary by choosing its own horizon accordingly. 
The AMPC procedure is inspired by an adaptive optimization procedure recently presented in~\cite{lukina2016arxiv}, which dynamically changes the amount of the effort it uses to search for a better solution in each step.  
The motivation for adaptation in~\cite{lukina2016arxiv} however was different, namely to take the optimizer out of a local minimum, and thus, ensure convergence to a global optimum.

    \section{The Adaptive-Horizon MPC Algorithm} % {Attack Strategies}
\label{sec:ampc}

We now present our new \emph{adaptive-horizon} \emph{model-predictive-control} algorithm, we call AMPC. We will use this algorithm as the controller strategy in the stochastic game on MDPs. We will also consider attack strategies that use AMPC. Since AMPC is an adaptive MPC procedure based on particle-swarm optimization (PSO), we first
briefly present background material on MPC and PSO.
%\todo[inline]{The rest of this section does not belong here. It should be moved in the description of the controller. Similar text can be used then for the advanced attacker.}

\subsection{Background on Model-Predictive Control}
%One way to find an action that can take a flock to a V-formation is based on using model-predictive control (MPC).  
Model-predictive control (MPC) determines the control action at current time $t$ by looking $h$ steps into the future and finding the best $h$-length sequence of control actions that can take the system from its current state $s(t)$ to a new state that has the lowest fitness. (Since we assume existence of a fitness metric $J$ that we are trying to minimize, we specialize the description of MPC to this case.) 
If ${s}_{\va^h}(t+h)$ denotes the state reached from state ${s}(t)$ in time $h$ following the actions $\va^h$ of length $h$, then in the MPC approach, at each time step $t$, the following minimization is performed to find the optimal set of actions
%\begin{align}
%&\textbf{opt-$\va$}^{h}(t)=\{\textbf{opt-$\va$}_i^{h}(t)\}_{i=1}^{b}=\argmin_{\va^h(t)}J(\boldsymbol{c}_{\va^h}(t+h)).
%\label{eq:opt}
%\end{align}
\begin{align}
&\textbf{opt-$\va$}^{h}(t)=\argmin_{\va^h(t)}J(s_{\va^h}(t+h)).
\label{eq:opt}
\end{align}
Since the model is an approximation of the system, only the first action $\va(t) = \textbf{opt-$\va$}^{1}(t)$ is applied as the action at time $t$,
and the remaining future $h-1$ actions found by the optimizer are ignored.  After the control action $\va(t)$ is applied, the system is left to evolve, and the process is repeated at $t\,{+}\,1, t\,{+}\,2,$ and so on.
%
%
%\vspace*{-1mm}
%\begin{align}
%J(\boldsymbol{c}(t),\va^h(t),{h}) = (\CV(\boldsymbol{c}_{\va}^{h}(t))-\CV^*)^2 &+ 
%(\VM(\boldsymbol{c}_{\va}^{h}(t))-\VM^*)^2 \nonumber \\ & +(\UB(\boldsymbol{c}_{\va}^{h}(t))-\UB^*)^2,
%\label{eq:fitness}
%\end{align}
%\noindent{}where ${h}$ is the length of the receding prediction horizon (RPH), $\va^h(t)$ is a sequence of accelerations of length ${h}$, and $\boldsymbol{c}_{\va}^{h}(t)$ is 
%the configuration reached after applying $\va^h(t)$ to $\boldsymbol{c}(t)$.
%
%

The MPC approach can be used for achieving a V-formation, as was outlined in~\cite{yang2016bda,yang2016love}.  These earlier works, however, did not use an adaptive dynamic window, and did not consider the adversarial control problem.
%We perform a single flock-wide minimization of $J$ at each time-step $t$ to obtain an optimal plan of  length $h$ of acceleration actions:\vspace*{-3mm}
%
%\begin{align}
%&\textbf{opt-$\va$}^{h}(t)=\{\textbf{opt-$\va$}_i^{h}(t)\}_{i=1}^{b}=\argmin_{\va^h(t)}J(\boldsymbol{c}(t),\va^h(t),{h}).
%\label{eq:opt}
%\end{align}
%\vspace*{-3mm}\noindent{} 
%
%We apply the first acceleration $\va_i(t) = \textbf{opt-$\va$}_i^{1}(t)$ as the optimal acceleration for bird $i$ %at time $t$. 
%

In MPC, optimization problem (\ref{eq:opt}) is additionally subject to constraints that bound the set of possible actions and states. For example, in our flocking model, the
magnitude of velocity and acceleration for each of the $B$ birds is bounded: $||\vv_i(t)||\,{\leqslant}\,\vv_{max},
||\va^h_i(t)||\,{\leqslant}\,\rho||\vv_i(t)||$ $\forall$ $i\,{\in}\,\{1,\ldots,B\}$,
where $\vv_{max}$ is a predefined constant and $\rho\,{\in}\,(0,1)$. 
% The initial state is selected following the given initial distribution. In the investigation of V-formation conducted in~\cite{lukina2016arxiv}, the initial positions and velocities of each bird are selected at random 
% within certain ranges, and limited such that the distance between any 
% two birds is greater than a (collision) constant $d_{min}$, and small
% enough so that each bird finds itself in the upwash or downwash region of at least one other bird. 

We use a \emph{particle-swarm-optimization algorithm} to solve the optimization problems generated by the MPC procedure.
%\todo[inline]{I added the following assuming we want to refer to the ARES approach - this might to be replaced if we want to use MPC instead. (I am not sure what Jesse used for his experiments right now)}
%\todo[inline]{Anna: as far as I understand Jesse is using MPC level-based with receding horizon, however, there are no PSO clones or important splitting}

% ASHISH: I think we should comment out the following; otherwise, we will look to be proposing something very close to TACAS paper.
%To solve the optimization problem, Lukina et al.~\cite{lukina2016arxiv} present an approach using \emph{Particle Swarm Optimization (PSO)}~\cite{Kennedy95particleswarm} to find potential next actions, combined with the idea of \emph{Importance Splitting}~\cite{kahn1951} in order to increase the chance of reaching a V-formation. Furthermore, they introduce adaptive horizons to allow for temporary worse situation in the flock enabling them to overcome local minima in the fitness function. Additionally, they introduced an adaptive number of particles for the PSO allowing them to better exploit the heuristics of PSO in combination with Importance Splitting as well as achieve a speed-up in the performance of their approach. However, their approach generates a plan offline to traverses the deterministic MDP in order to reach a specific state. In contrast, we present an approach to control the flock of birds online. The resilience of our approach is demonstrated in the presence of adversaries and noise.
 
    \subsection{Background on Particle Swarm Optimization}
\label{sec:swarmOptimization}

Particle Swarm Optimization (PSO) is a randomized approximation algorithm for determining the parameters that minimize a possibly nonlinear and possibly discontinuous cost (or fitness) function. PSO was first introduced by~\cite{Kennedy95particleswarm}. In an interesting twist of events, PSO took its original inspiration from bird flocking.

%As in~\cite{lukina2016arxiv}, our controller-PSO uses ``acceleration birds'' (these are the particles in the swarm). They should not be confused with the actual flocking birds. 
The PSO procedure is best described using the metaphor of a swarm of insects collaboratively trying to find the location of food. The insects, also called particles, live in the space defined by all possible valuations of the unknown parameters (of the optimization problem).  The food is located at the position where the objective function is minimized. PSO works by having a swarm of particles, which have the same goal of finding food (the reward) without knowing its location. Each particle is informed about its distance to the food (value of the objective function). The PSO algorithm repeatedly redistributes each particle towards the one closest to the food, with a speed proportional to the distance separating them, until all particles converge to the same position.

AMPC employs Matlab's toolbox $\texttt{particleswarm}$, which performs the classical version of PSO. A swarm of $p$ particles is sampled uniformly at random within a given bound on their positions and velocities. In the bird flocking example, if we try to find acceleration vectors by optimization over horizon $h$, then {\em{one}} ``particle'' represents $h$ 2-dimensional vectors for each of the $B$ birds, along with a vector of values that determine how these $h\cdot B$ acceleration vectors will be updated. 
%
%In the games we are considering each particle represents either a flock of bird-accelerations sequence $\{\va_i^{h}\}_{i=1}^b$, or a a flock of displacements sequence $\{\vd_i^{h}\}_{i=1}^b$, where $h$ is the current length of the receding horizon. 
After choosing a neighborhood of random size for each particle $j$, $j\,{\in}\,\{1,\ldots,p\}$, PSO computes the value of the given fitness function for each particle, and stores two vectors for each particle $j$: its so-far personal-best position $\mathbf{x}_{P}^j(t)$, and the position of its fittest neighbor $\mathbf{x}_{G}^j(t)$. The positions and velocities of the particle swarm $j\,{\in}\,\{1,\ldots,p\}$ are updated the following way:

\vspace*{-4mm}
\begin{align}
\mathbf{v}^j(t+1) = \omega\cdot\mathbf{v}^j(t) &+ y_1\cdot \mathbf{u_1}(t+1)\otimes(\mathbf{x}_{P}^j(t)-\mathbf{x}^j(t))  \nonumber \\
&+ y_2\cdot \mathbf{u_2}(t+1)\otimes(\mathbf{x}_{G}^j(t)-\mathbf{x}^j(t)),
\label{eq:swarm}
\end{align}

\vspace*{-1mm}\noindent{}where $\omega$ is an \emph{inertia weight}, which quantifies the trade-off between global and local exploration of the swarm (the value of $\omega$ is proportional to the exploration range); $y_1$ and $y_2$ are the \emph{self adjustment} and the \emph{social adjustment}, respectively; $\mathbf{u_1},\mathbf{u_2}\,{\in}\,{\rm Uniform}(0,1)$ are random variables; and $\otimes$ is the vector dot product, that is, $\forall$ random vector $\mathbf{z}$: $(\mathbf{z}_1,\ldots,\mathbf{z}_b)\otimes(\mathbf{x}_1^j,\ldots,\mathbf{x}_b^j)=(\mathbf{z}_1\mathbf{x}_1^j,\ldots,\mathbf{z}_b\mathbf{x}_b^j)$.

If the value of the fitness computed at each step of the PSO
for $\mathbf{x}^j(t+1)\,{=}\,\mathbf{x}^j(t)\,{+}\,\mathbf{v}^j(t+1)$ falls below the one for $\mathbf{x}_{P}^j(t)$, then $\mathbf{x}^j(t+1)$ is reassigned to $\mathbf{x}_{P}^j(t+1)$. A global best for the next iteration is determined as the particle with the best fitness among $j\,{\in}\,\{1,\ldots,p\}$. The stopping criterion of the PSO algorithm is either reaching the maximum number of iterations set in advance, or reaching the set time bound, or satisfying the minimum criterion. 

PSO can be used to solve any optimization problem. We use it to solve the optimization problem generated in the MPC approach. In a V-formation game, it can be used to obtain the birds' best accelerations, or even the best displacements, at each time step -- depending on whether MPC/PSO is being used by the controller or the attacker.

\emph{Remark}. We assume that PSO is fair, in the sense that it has a chance to sample all the points in the parameter space, and therefore it has the chance to find the optimal solution with probability one, given enough time.

%In a similar spirit, our advanced-attacker-PSO uses so called "displacement birds" (particles).

    	\subsection{The Main Algorithm of AMPC}
\label{sec:lerec}
% Inspired by the ARES algorithm in~\cite{lukina2016arxiv}, we propose a level-based receding-horizon model-predictive control algorithm we call LEREC. 
We propose the main algorithm of AMPC. %short for level-based Adaptive-horizon Model-Predictive Control {Jesse: this is already mentioned in the beginning of the section}. 
This algorithm performs step-by-step control of a given MDP $\M$ by looking $h$ steps ahead and predicting the next best state to move to.
We use PSO to identify the potentially best actions $\va^h$ in the current state achieving the optimal value of the fitness function in the next state.  For bird flocking, the fitness function, \texttt{Fitness}$(\M,\va^h,h)$ of $\va^h$ is defined as the minimum fitness metric $J$ obtained within $h$ steps by applying $\va^h$ on $\M$. Formally, we have
\vspace*{-2mm}
\begin{align}
\texttt{Fitness}(\M,\va^h,h) = \min_{1\leqslant \tau \leqslant h}{J(s_{\va^h}^\tau)}
\end{align}
where $s_{\va^h}^\tau$ is the state after apply the $\tau$th action of $\va^h$ on $\M$. For horizon $h$, PSO searches for the best sequence of 2-dimensional acceleration vector of length $h$, thus having $2Bh$ parameters to be optimized. The number of particles used in PSO is proportional to the number of parameters, i.e., $p = 2\beta B h$.
%was defined in equation~(\ref{eq:fitness}). 

The pseudocode for the AMPC algorithm is given in Algorithm~\ref{alg:lerec}. A novel feature of AMPC is that, unlike classical MPC that uses a fixed horizon $h$, AMPC adaptively chooses an $h$ depending on whether it is able to reach a fitness value that is lower than the current fitness by our chosen quanta $\Delta_i$, $\forall~i\,{\in}\,\{0,\ldots,m\}$.
%we would have had to resort to local optima without any guarantee of reaching a stable state.

%Following~\cite{lukina2016arxiv} we introduce level-based horizon as a way to overcome shortcomings of MPC and increase effectiveness of the optimization process. 
%To overcome the shortcoming of MPC and in order to increase the effectiveness of the optimization process, we introduce 
AMPC is hence an adaptive MPC procedure that uses level-based horizons.
It employs PSO to identify the potentially best next actions.
%for our flock.
If the chosen actions improve (decrease) the fitness of the next state $J(s_{k+h})$, $\forall~k\,{\in}\,\{0,\ldots,m\cdot h_{\mathit{max}}\}$, in comparison to the fitness of the previous state $J(s_k)$ by the predefined $\Delta_i$, the controller considers these actions to be worthy of leading the flock towards or keeping it in the V-formation.%
\footnote{We focus our attention on bird flocking, since the details generalize naturally to other MDPs that come with a fitness metric.}

In this case, the controller applies the actions to each bird and transitions to the next state of the MDP.
The threshold $\Delta_i$ determines the next level $\ell_i\,{=}\,J(s_{k+\widehat{h}})$ of the algorithm, where $\widehat{h} \leqslant h$ is the horizon with the best fitness. The prediction horizon $h$ is increased iteratively if the fitness has not been decreased enough. Upon reaching a new level, the horizon is reset to one (see Algorithm~\ref{alg:lerec}). Having a horizon $\widehat{h}\,{>}\,1$ means it will take multiple transitions in the MDP in order to reach a solution with improved fitness. However, when finding such a solution with $\widehat{h}\,{>}\,1$, we only apply the first action to transition the MDP to the next state. This is explained by the need to allow the other player (environment or an adversary) to apply their action before we obtain the actual next state. 
%adjustment faster and deal with unforeseen situations arising during runtime.
%
If no new level is reached within $h_{\mathit{max}}$ horizons, the first action of the best $\va^h$ using horizon $h_{\mathit{max}}$ is applied. 

The dynamic threshold $\Delta_i$ is defined as in~\cite{lukina2016arxiv}. Its initial value $\Delta_0$ is obtained by dividing the fitness range to be covered into $m$ equal parts, that is, $\Delta_0\,{=}\,(\ell_0\,{-}\,\ell_m)\,{/}\,m$, where $\ell_0\,{=}\,J(s_0)$ and $\ell_m\,{=}\,\varphi$. Subsequently, $\Delta_i$ is determined by the previously reached level $\ell_{i-1}$, as $\Delta_i\,{=}\,\ell_{i-1}{/}(m\,{-}\,i\,{+}\,1)$. This way AMPC advances only if $\ell_i\,{=}\,J(s_{k+\widehat{h}})$ is at least $\Delta_i$ apart from $\ell_{i-1}\,{=}\,J(s_{k})$.

This approach allows us to force PSO to 
escape from a local minimum, even if this implies passing over a bump, by gradually increasing the exploration horizon $h$. We assume that the MDP is controllable and that the set $G$ of good states is not empty, which means, that from any state, it is possible to reach a state whose fitness decreased by at least $\Delta_i$. 
%Figure~\ref{fig:approach} illustrates our approach.
Algorithm~\ref{alg:lerec} illustrates our approach.

%, which terminates if the stable state has been reached or the time elapsed.
%\begin{figure}[t]
%\centering
%	\input{diagram2}
%\vspace*{-2mm}
%\caption{Graphical representation of AMPC.}
%	\label{fig:approach}
    %\vspace*{-5mm}
%\end{figure}

%\input{algo_RPH}
%\SetAlgoSkip{}
%%%%%%%%% outer loop of level-based control
\begin{algorithm}[t]
	\SetKwFunction{particleswarm}{particleswarm}
    \SetKwFunction{Fitness}{Fitness}
    \SetKwFunction{RecedeHorizon}{RecedeHorizon}
	\SetKwInOut{Input}{Input}
	\SetKwInOut{Output}{Output}
    \DontPrintSemicolon
	\Input{$\M,\varphi,{h}_{\mathit{max}},m,B, \texttt{Fitness}$}
	\Output{$\{\va^i\}_{1\leqslant i\leqslant\,m}$ \textit{// optimal control sequence}}
	\BlankLine
	Initialize $\ell_0\leftarrow J(s_0)$; $\widehat{J}\leftarrow\inf$; ${p}\leftarrow 2\beta B h$; $i\leftarrow 1$;  ${h}\leftarrow 1$; $\Delta_0\leftarrow (\ell_0 - \varphi)/m$;
	\BlankLine
	\While{($\ell_{i-1} > \varphi)$ $\land$ $(i < m)$}
	{
     	\textit{// find and apply first best action out of the horizon sequence of length $h$}\;
        $[\va^{h},\widehat{J}]\leftarrow$\particleswarm{$\Fitness,\M,p,h$};\\

        \eIf{$\ell_{i-1}-\widehat{J} > \Delta_i \lor h = h_{\mathit{max}}$} 
		{
        \textit{// if a new level or the maximum horizon is reached}\;
        	$\va^i\leftarrow \va^{h}_1$;
            $\M\leftarrow\M^{\va^i}$;
            \textit{// apply the action and move to the next state}\;
			$\ell_i\leftarrow J(s(\M))$; \textit{// update $\ell_i$ with the fitness of the current state}\;
            $\Delta_i\leftarrow \ell_i/(m-i)$; \textit{// update the threshold on reaching the next level}\;
            $i \leftarrow i + 1$;
			${h} \leftarrow 1$;
			${p} \leftarrow 2\beta B h$; \textit{// update parameters}\;
		}
		{
			${h} \leftarrow {h} + 1$;
            $p\leftarrow 2\beta B h$; \textit{// increase the horizon}\;
		}
	}
	\caption{AMPC: Adaptive Model-Predictive Control}
	\label{alg:lerec}
\end{algorithm}
\setlength{\floatsep}{1cm}

\begin{theorem}[AMPC Convergence]
\label{thm:ampc}
Given an MDP $\M\,{=}\,(S,A,T,J)$ with positive and continuous fitness function $J$, and a nonempty set of target states $G\,{\subset}\,S$ with $G\,{=}\,\{s\,|\,J(s)\,{<}\,\varphi\}$. If the transition relation $T$ is controllable with actions in $A$, then there is a finite maximum horizon $h_{\mathit{max}}$ and a finite number of execution steps $m$, such that AMPC is able to find a sequence of actions $a_1,\ldots,a_m$ that brings a state in $S$ to a state in $G$ with probability one.
\end{theorem}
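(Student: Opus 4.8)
The plan is to split the argument into a deterministic \emph{scheduling} part (which fixes a finite $m$) and a probabilistic \emph{reachability} part (which fixes a finite $h_{\mathit{max}}$), and then combine them through a finite union bound over the level-advances performed by an execution.

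For the scheduling part I would first observe that the threshold schedule forces each successful level-advance to shrink the current level by a fixed factor: from the acceptance test $\ell_{i-1}-\widehat{J}>\Delta_i$ with $\Delta_i=\ell_{i-1}/(m-i+1)$ one gets $\ell_i\le\ell_{i-1}\,(m-i)/(m-i+1)$, and telescoping yields $\ell_i\le\ell_0\,(m-i)/m$. Since $\ell_0=J(s_0)$ is finite and positive and $\varphi>0$, choosing any $m>\ell_0/\varphi$ guarantees that, \emph{as soon as every level-advance so far has been genuine}, the loop condition $\ell_{i-1}>\varphi$ fails after at most $m-1$ advances, i.e. AMPC has reached a state with $J<\varphi$, a state in $G$. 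This step is purely arithmetic.

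The reachability part is the crux. All thresholds that occur satisfy $\Delta_i\ge\varphi/m=:\delta>0$, so it suffices to produce a single finite $H$ such that from every state $s$ an AMPC execution can visit with $J(s)\ge\varphi$ there is an action sequence of length $\le H$ reaching a state $s'$ with $J(s')\le J(s)-\delta$ (or with $J(s')<\varphi$, which also passes the test once $\Delta_i\le\ell_{i-1}-\varphi$); then set $h_{\mathit{max}}:=H$. Controllability of $T$ together with $G\neq\emptyset$ gives such a finite sequence for each \emph{individual} $s$; to upgrade this to a \emph{uniform} bound I would confine the estimate to the bounded set of states an execution can actually reach — velocities and accelerations are bounded by construction and an execution takes only finitely many steps, so positions drift only a bounded amount — and use continuity of $J$ and controllability restricted to that bounded set to extract a finite $H$. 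With $h_{\mathit{max}}$ chosen this way, each PSO call optimizes over a horizon $h\le h_{\mathit{max}}$ whose parameter space already contains a qualifying action sequence, so by the assumed fairness of PSO (it samples the entire parameter space and finds the optimum with probability one given enough time) the call returns, with probability one, an action sequence at least as good; hence the acceptance test passes on a genuine $\Delta_i$-decrease at some horizon $\le h_{\mathit{max}}$, and the ``$h=h_{\mathit{max}}$'' escape in the algorithm is, with probability one, never the reason an action is committed. (Because the MDP is deterministic and the theorem concerns AMPC in isolation — no adversary — committing only the first action of a horizon-$h$ plan and re-planning still realizes that plan over at most $h_{\mathit{max}}$ steps.)

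Combining the parts: an execution performs at most $m$ level-advances, and the event ``some level-advance fails to achieve its threshold decrease'' is a union of at most $m$ probability-zero events, hence has probability zero; so with probability one every advance is genuine, and by the scheduling part AMPC produces a finite action sequence $a_1,\dots,a_m$ steering $s_0$ into $G$. I expect the hardest step to be the passage from \emph{pointwise} controllability to the \emph{uniform} horizon $h_{\mathit{max}}$: the sublevel sets of the V-formation fitness are not compact (a rigidly translated flock has identical fitness), so one cannot compactify the full state space and must instead bound the set of states visited along a run, which makes the choice of $h_{\mathit{max}}$ mildly self-referential and deserves the most care. A secondary subtlety — harmless in the deterministic single-player setting for the reason noted above — is the off-by-one indexing between the threshold $\Delta_i$ tested and the level $\ell_i$ recorded after a single committed action.
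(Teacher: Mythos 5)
Your proposal is correct and follows essentially the same route as the paper's own (much terser) proof: a fixed per-level quantum $\Delta_i\,{\geqslant}\,\Delta\,{>}\,0$ determined by $m$, a qualifying action sequence guaranteed to exist by the controllability assumption and found with probability one by PSO fairness, hence at most $m$ macro steps to reach $G$. The two points you flag as delicate --- extracting a uniform $h_{\mathit{max}}$ from pointwise controllability, and the first-action-only bookkeeping of $\ell_i$ --- are precisely the points the paper absorbs into its controllability/fairness assumptions without further argument, so your added detail (the explicit telescoping bound $\ell_i\leqslant\ell_0(m-i)/m$ and the union bound over level-advances) only makes the same argument more explicit.
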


\begin{proof}
In each (macro) step of horizon length $h$, from level $\ell_{i-1}\,{=}\,J(s_k)$ to level $\ell_i\,{=}\,J(s_{k+\widehat{h}})$, AMPC decreases the distance to $\varphi$ by $\Delta_i\,{\geqslant}\,\Delta$, where $\Delta\,{>}\,0$ is fixed by the number of steps $m$ chosen in advance. Hence, AMPC converges to a state in $G$ in a finite number of steps, for a properly chosen $m$. AMPC is able to decrease the fitness in a macro step by $\Delta_i$ by the controllability assumption and the fairness assumption about the PSO algorithm. Since AMPC is a randomized algorithm, the result is probabilistic.
\end{proof}

Note that AMPC is a general procedure that performs adaptive MPC using PSO for dynamical systems that are controllable, come with a fitness metric, and have at least one optimal solution. In an adversarial situation two players have opposing objectives. The question arises what one player assumes about the other when computing its own action, which we discuss next.

    \newcommand{\anoise}{{\mathcal{AN}}}
\newcommand{\pnoise}{{\mathcal{PN}}}
\section{Stochastic Games for V-Formation}
\label{sec:sgv}

We describe the specialization of the stochastic-game verification problem to
V-formation.  In particular, we present the AMPC-based control strategy for reaching a V-formation, and the various attacker strategies against which we evaluate the resilience of our controller.

The MDP $\M$ for V-formation was presented in Section~\ref{sec:background}. The state variables of the MDP are the positions and velocities of the birds, and the control variables (defining the actions) are the accelerations and displacements. In the transition relation given in equation~(\ref{eq:v}), the attacker chooses the displacement $\vec{d}(t)$ it needs to manipulate the position of the birds,
whereas the controller chooses the acceleration $\vec{a}(t)$ to apply. Together, the pair $(\vec{a}(t),\vec{d}(t))$ defines the action that transforms one MDP state to another. We now define the controller's and attacker's strategies.

\subsection{Controller's Adaptive Strategies}

Given current state $(\vec{x}(t),\vec{v}(t))$, the controller's strategy $\sigma_C$ returns a probability distribution on the space of all possible accelerations (for all birds).  As mentioned above, this probability distribution is specified implicitly via a randomized algorithm that returns an actual acceleration (again for all birds).  This randomized algorithm is the AMPC algorithm, which inherits its randomization from the randomized PSO procedure it deploys.  

When the controller computes an acceleration, it assumes that the attacker does {\em{not}} introduce any disturbances; i.e., the controller uses the following model:
\vspace*{-4mm}\begin{eqnarray}
 \xv_i(t + 1) &=& \xv_i(t) + \vv_i(t+1) \qquad \forall~i\,{\in}\,\{1,\ldots,B\}, \nonumber \\
 \vv_i(t + 1) &=& \vv_i(t) + \va_i(t), \label{eq:noattack} %\\[-6mm]
\end{eqnarray}
where $\va(t)$ is the only control variable. Note that the controller chooses its next action $\va(t)$ based on the current configuration $(\xv(t),\vv(t))$ of the flock using MPC. The current configuration may have been influenced by the disturbance $\vec{d}(t-1)$ introduced by the attacker in the previous time step.  Hence, the current state need not be the state predicted by the controller when performing MPC in step $t-1$. Moreover, depending on the severity of the attacker action $\vec{d}(t-1)$, the AMPC procedure dynamically adapts its behavior, i.e.\ the choice of horizon $h$, in order to enable the controller to pick the best control action $\vec{a}(t)$ in response.

\subsection{Attacker's Strategies}

We are interested in evaluating the resilience of our V-formation controller when it is threatened by an attacker that can remove a certain number of birds from the flock, or manipulate a certain number of birds by taking control of their actuators (modeled by the displacement term in equation~(\ref{eq:trans})).
We assume that the attack lasts for a limited amount of time, after which the controller attempts to bring the system back into the good set of states. When there is no attack, the system behavior is the one given by equation~(\ref{eq:noattack}).

Note that there can be many different criteria for evaluating the success of an attack,  %(see Remark~\ref{remark:criteria})
but in our experiments, the controller is declared the winner if it can bring the flock to V-formation.
We consider three attack strategies (but see the future work discussion in Section~\ref{sec:conclusion}), each of which defines a V-formation game.

\vspace*{-0.5mm}\paragraph{\bf Remove Birds Game.}
In an RBG, the attacker selects a subset of $R$ birds, where $R\,{\ll}\,B$, and removes them from the flock.  The removal of bird $i$ from the flock at time $t\,{=}\,0$ can be simulated in our framework by allowing the attacker to set the displacement $\vd_i(0)$ for bird $i$ to $\infty$.  We assume that the flock is in a V-formation at time $t\,{=}\,0$.  
Thus, the goal of the controller is to bring the flock back into a V-formation consisting of $B\,{-}\,R$ birds.
%he controller needs to find the best adjustments in velocity $a_i$ for all remaining birds $i \in N - R$ during its turn. %$i \in N \wedge i \notin R$.
%Essentially, this results in a single-move game for the adversary. 
In an RBG, the attacker plays only one move.
When picking birds, the attacker is able to decide which birds will have the greatest negative impact on the flock's fitness when removed from the flock. Apart from seeing if the controller can bring the flock back to a V-formation, we also analyze the time it takes the controller to do so. 
%return to a v-formation for $R \leq \lceil\log(N)\rceil$ and 

% \todo[inline]{SAS: I would only suggest that the size R of the subset of
% birds removed from the flock (of size N) be such that R << N.  O/w I am
% not sure how interesting this game is.  Jesse has simulation results for
% R=1 and N=7.  Also, we should consider this game with and without process
% noise (PN), as Jesse has shown that the resiliency of the flock to remain
% in a V is highly dependent on the magnitude of PN.  It does very well with
% no PN or small PN, but resilience seems to degrade with increasing PN.}
%
%\begin{theorem}
%For any birds picked by the attacker, where $\left\vert{N - R}\right\vert \geq 3$, the planner can find 
%accelerations for each remaining bird in $N$ that will finally lead to a state $s^{*}$ such that cost 
%$J(s^{*})\{\leqslant}\,\varphi$.
%\end{theorem}

\vspace*{-0.5mm}\paragraph{\bf Random Displacement Game.}
In an RDG, the attacker chooses the displacement vector for a fixed number $R$ of birds uniformly from the space $[0,M]\times[0,2\pi]$. This means that the magnitude of the displacement vector is picked from the interval $[0,M]$, and the direction of the displacement vector is picked from the interval $[0,2\pi]$. We vary $M$ in our experiments. The $R$ birds that are picked in different steps are not necessarily the same, as the attacker makes this choice uniformly at random at runtime as well.
%In our second game, each player has control over all birds in the flock. The flock starts in a V-formation. However, both players have different goals and strategies. While the controller wants to keep the flock in a V-formation, the adversarial player tries to disrupt the V. Both players use the same planning approach but the controller tries to minimize the fitness function while the adversary tries to maximize the fitness in each step.
%In our second game, the adversarial player introduces malicious birds into the flock. These birds are controlled by the other player and hence can perturb the flock. To do so, the adversary adds small amounts of noise to this bird to distract the flock and disturb the v-formation. If this alone is not successful, the adversary can use a greater amount of noise to achieve the goal. However, this allows the controller to identify the adversary and henceforth ignore the malicious bird. 
The game starts from an initial V-formation. The attacker is allowed a fixed number of moves, say $20$, after which the displacement vector is identically $0$ for all birds.  The controller, which has been running in parallel with the attacker, is then tasked with moving the flock back to a V-formation, if necessary.
\vspace*{-0.5mm}\paragraph{\bf{AMPC Game.}}
An AMPC game is similar to an RDG except that the attacker does not use a uniform distribution to determine the displacement vector. The attacker is advanced and calculates the displacement (that will be the worst for the controller) using the AMPC procedure. See Figure~\ref{fig:ampc}.  In detail, the attacker applies AMPC, but assumes the controller applies zero acceleration. Thus, the attacker uses the following model of the flock dynamics:
\vspace*{-1mm}\begin{eqnarray}
 \xv_i(t + 1) &=& \xv_i(t) + \vv_i(t+1) + \vd_i(t) \qquad \forall~i\,{\in}\,\{1,\ldots,B\}, \nonumber \\
 \vv_i(t + 1) &=& \vv_i(t). \label{eq:attack} %\\[-6mm]
\end{eqnarray}
Note that the attacker is still allowed to have $\vd_i(t)$ be nonzero for a small number $R$ of birds. However, it can choose which $R$ birds it picks in each step.  It uses the AMPC procedure to simultaneously pick the $R$ birds and their displacements.

\begin{theorem}[AMPC resilience in a C-A game]
\label{thm:resilience}
Given a controller-attacker game, there is a finite maximum horizon $h_{\mathit{max}}$ and a finite maximum number of game-execution steps $m$ such that AMPC controller will win the controller-attacker game in $m$ steps with probability one.
\end{theorem}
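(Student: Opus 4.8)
The plan is to derive this as a corollary of Theorem~\ref{thm:ampc} (AMPC Convergence). The key structural observation is that in each of the three attack strategies we consider, the attacker is active for only a \emph{finite} number of steps: in the \textbf{Remove Birds Game} the attacker plays a single move at $t=0$, and in the \textbf{Random Displacement Game} and the \textbf{AMPC Game} the attacker is allowed a fixed number $T_a$ of moves, after which $\vec{d}(t)\equiv 0$. I would therefore split any run of the game into an \emph{attack phase} $0\le t< T_a$ and a \emph{recovery phase} $t\ge T_a$, and argue that the recovery phase is exactly an instance of the setting of Theorem~\ref{thm:ampc}.

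First I would show that the attack phase leaves the system in a well-defined, recoverable state $s(T_a)$. For the displacement games this is immediate: each displacement has magnitude at most $M$, there are at most $T_a$ of them affecting at most $R$ birds per step, and the controller's accelerations are bounded by $\|\va_i\|\le\rho\|\vv_i\|$ with $\|\vv_i\|\le\vv_{max}$; hence after $T_a$ steps the flock configuration lies in a bounded subset $S'\subseteq S$, and $G$ remains reachable from every point of $S'$ by the controllability assumption on $\M$. For the Remove Birds Game, setting $\vd_i(0)=\infty$ simply replaces $\M$ by the sub-MDP $\M'$ on the remaining $B-R$ birds; here I would additionally invoke (as is implicit in the game's definition) that $\M'$ is itself controllable and that the V-formation target set for $B-R$ birds is nonempty, so that $\M'$ again satisfies the hypotheses of Theorem~\ref{thm:ampc}. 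During the attack phase the controller's own AMPC actions are harmless: accelerations are bounded, so they cannot drive the flock out of the recoverable region, and the only effect of the mismatch between the controller's no-disturbance model~(\ref{eq:noattack}) and the true dynamics~(\ref{eq:trans}) is that AMPC may fail to make progress while $t<T_a$ --- which we do not need it to.

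Next I would apply Theorem~\ref{thm:ampc} to the recovery phase. At time $T_a$ the controller effectively re-initialises AMPC with $\ell_0:=J(s(T_a))$ and a fresh step budget, and from then on it faces the pure control problem on $\M$ (or $\M'$) with no active adversary. Since $J$ is positive and continuous and the dynamics are controllable, Theorem~\ref{thm:ampc} yields a finite horizon bound $h_{\mathit{max}}$ and a finite number $m'$ of steps after which AMPC reaches a state with $J<\varphi$, i.e.\ a state in $G$, with probability one; the probability-one guarantee comes, as in the proof of Theorem~\ref{thm:ampc}, from the fairness of PSO, which ensures the per-level fitness decrease of at least $\Delta_i\ge\Delta>0$ is achieved almost surely. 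Taking $m:=T_a+m'$ and the same $h_{\mathit{max}}$ then proves the claim.

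The main obstacle is the attack phase rather than the recovery phase: one must argue that the adversary --- especially the AMPC-enabled adversary, which is strategically on par with the controller --- cannot push the flock into a configuration from which V-formation is no longer reachable. This is precisely where the two standing assumptions do the work: the finite attack duration $T_a$ (so the adversary cannot ``chase'' the controller forever) together with the boundedness of displacements and accelerations confine $s(T_a)$ to a region on which controllability still guarantees a path to $G$. A secondary technical point worth spelling out is that AMPC's level schedule $\Delta_i=\ell_{i-1}/(m-i+1)$ must be recalibrated at $t=T_a$ using the post-attack fitness $J(s(T_a))$, since the pre-attack schedule computed from $J(s_0)$ is no longer appropriate for the (possibly much larger) value $J(s(T_a))$.
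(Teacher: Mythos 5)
Your proposal follows essentially the same route as the paper: the paper's proof is a one-liner invoking exactly the three ingredients you use --- controllability of the flock MDP, fairness of PSO, and bounded attack duration --- and concluding by Theorem~\ref{thm:ampc}. Your attack-phase/recovery-phase decomposition and the remarks on the sub-MDP for the bird-removal game and on recalibrating the level schedule simply make explicit the details the paper leaves implicit, so the argument matches and is, if anything, more carefully spelled out.
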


\begin{proof}
Since the flock MDP (defined by Equation~6) is controllable, the PSO algorithm we use is fair, and the attack has a bounded duration, the proof of the theorem follows from Theorem~\ref{thm:ampc}. 
\end{proof}

\begin{remark}
While Theorem~\ref{thm:resilience} states that the controller is expected to win with probability one, we expect winning probability to be possibly lower than one in many cases because: (1)~the maximum horizon $h_{\mathit{max}}$ is fixed in advance, and so is (2) the maximum number of execution steps $m$; (3) the underlying PSO algorithm is also run with bounded number of particles and time.
\end{remark}

	\section{Statistical MC Evaluation of V-Formation Games}
\label{sec:results}
\newcommand{\majExp}{{2,000}}

As discussed in Section~\ref{sec:problem}, the
stochastic-game verification problem we address in the context of the V-formation-AMPC algorithm is formulated as follows.  Given a flock MDP $\M$ (we consider the case of $B\,{=}\,7$ birds), acceleration actions $\va$ of the controller, displacement actions $\vd$ of the attacker, the randomized strategy $\sigma_C: S\,{\mapsto}\,PD(C)$ of the controller (the AMPC algorithm), and a randomized strategy
$\sigma_D: S\,{\mapsto}\,PD(D)$ for the attacker,
determine the probability of reaching a state $s$ where the fitness function $J(s)\,{<}\,\varphi$ (V-formation in a 7-bird flock), starting from an initial state (in this case this is a V-formation), in the underlying Markov chain induced by strategies $\sigma_C$, $\sigma_D$ on $\M$.

Since the exact solution to this reachability is intractable due to the infinite/continuous space of states and actions, we solve it approximately with classical statistical model-checking (SMC).  The particular SMC procedure we use is described in~\cite{grosu2014isola} and based on an {\em additive} or {\em absolute-error $(\varepsilon,\delta)$-Monte-Carlo-approximation scheme}. This technique requires running $N$ i.i.d. game executions, each for a given maximum time horizon, computing if these executions reach a V-formation, and returning the average number of times this occurs.

The $N$ i.i.d. experiments determine the random variables $Z_1, ...,Z_N$, where the sample mean $\mu_Z\,{=}\,(Z_1\,{+}\,{\ldots}\,{+}\,Z_N)/N$ is assumed to be sufficiently greater than 0. In this case, one can exploit the Bernstein's inequality and fix $N$ to $\Upsilon\,{\propto}\,ln(1/\delta)/\varepsilon^2$. This results in an additive-error $(\varepsilon,\delta)$-approximation scheme:
\[
\Pr\left[\mu_Z\,{-}\,\varepsilon\leqslant\widetilde{\mu}_Z\leqslant\mu_Z\,{+}\,\varepsilon)\right]\geqslant{}1-\delta,
\]
where $\widetilde{\mu}_Z$ approximates $\mu_Z$ with absolute error $\varepsilon$ and probability $1-\delta$. In our case, each $Z_i$ is a Bernoulli random variable, where~1 means that the execution ends in a V-formation, and~0 means the opposite:
 
 \[Z=\left\{
\begin{array}{ll}
1, & \text{if}\:\:\exists t \in [0, m], J(s(t)) < \varphi,\\
0, & \text{otherwise}.
\end{array}\right.\]

This allows us to use the Chernoff-Hoeffding instantiation of the Bernstein's inequality, and fix the proportionality constant to $\Upsilon\,{=}\,4\,ln(2/\delta)/\varepsilon^2$, as in~\cite{HLMP04}. 

Each of the games described in Section~\ref{sec:sgv} is executed 2,000 times. For a confidence ratio $\delta\,{=}\,0.01$, we thus obtain an additive error of $\varepsilon\,{=}\, 0.1$.

We use the following parameters in the game executions: number of birds $B\,{=}\,7$, threshold on the fitness $\varphi\,{=}\,10^{-3}$, maximum horizon $h_{\mathit{max}}\,{=}\,5$, number of particles in PSO $p\,{=}\,20\,B\,h$. In RBG, the controller is allowed to run for a maximum of $30$ steps. In RDG and AMPC game, the attacker and the controller run in parallel for $20$ steps, after which the displacement becomes $0$, and the controller has a maximum of $20$ more steps to restore the flock to a V-formation.

To perform SMC evaluation of our AMPC approach we designed the above experiments in C and ran them on the Intel Core i7-5820K CPU with 3.30 GHz and with 32GB RAM available.\\
%\todo[inline]{We then present our experimental results and end with conclusions.}

\begin{figure}[t]
 \vspace*{-3ex}
 \centering
  \includegraphics[width=.495\textwidth]{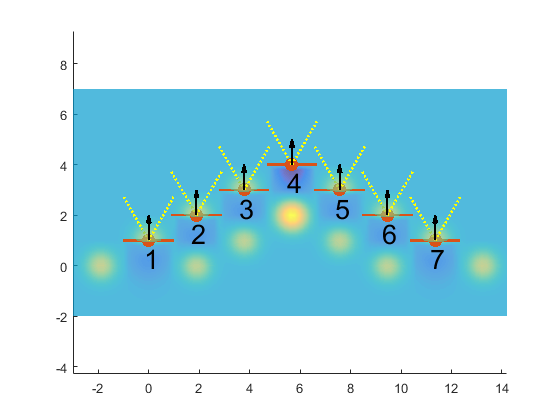}
  \includegraphics[width=.495\textwidth]{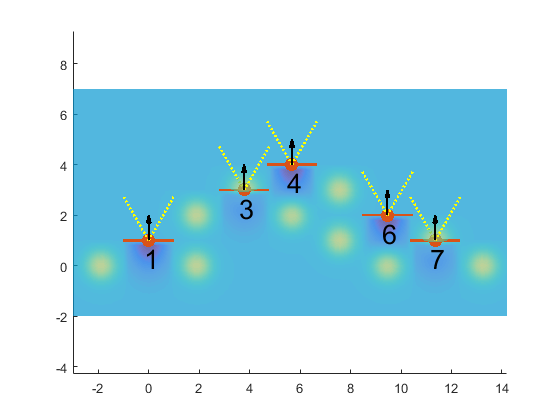}
  \vspace*{-5ex} 
  \caption{Left: numbering of the birds. Right: configuration after removing Bird 2 and 5. The red-filled circle and two protruding line segments represent a bird's body and wings. Arrows represent bird velocities. Dotted lines illustrate clear-view cones. A brighter/darker background color indicates a higher upwash/downwash.}
  \label{fig:numbering}
  \vspace{-3ex}
 \end{figure}

\begin{table}[h]
\centering
\vspace*{-4ex}
\caption {Results of 2,000 game executions for removing 1 bird with $h_{\mathit{max}}\,{=}\,5$,  $m\,{=}\,40$}
\vspace*{-1ex}
    \begin{tabular}{lcccccccccc}
    \toprule
		&&& Ctrl. success rate, \% &&& Avg. convergence duration &&& Avg. horizon\\ \midrule
    Bird 4  && & $99.9$           && & $12.75$  &&& $3.64$                            \\
    Bird 3  && & $99.8$           && & $18.98$  &&& $4.25$                            \\
    Bird 2  && & $100$           && & $10.82$  &&& $3.45$                            \\ \bottomrule
    \end{tabular}
\label{tab:resRemoveOne}
%\vspace*{-3ex}
\end{table}

%\vspace*{-4ex}
\begin{table}[ht]
%\vspace*{-3mm}
% 	\scriptsize
	\centering
	\caption{Results of 2,000 game executions for removing 2 birds with $h_{\mathit{max}} \,{=}\,5$, $m\,{=}\,30$}
	\begin{tabular}{lccccccccccc}
		\toprule
		&&& Ctrl. success rate, \% &&& Avg. convergence duration &&& Avg. horizon\\
      	\midrule
        Birds 2 and 3 & & &$0.8$ & & &$25.18$&&&$4.30$\\
        Birds 2 and 4 & & &$83.1$ & & &$11.11$ &&&$2.94$\\
        Birds 2 and 5 & & &$80.3$ & & &$9.59$ &&&$2.83$\\
        Birds 2 and 6 & & &$98.6$ & & &$7.02$ &&&$2.27$\\
        Birds 3 and 4 & & &$2.0$ & & &$22.86$ &&&$4.30$\\
        Birds 3 and 5 & & &$92.8$ & & &$11.8$ &&&$3.43$\\
		\bottomrule
\end{tabular}
\label{tab:resRemoveTwo}
%\vspace*{-3ex}
\end{table}

\begin{table}[t]
%\vspace*{-3mm}
% 	\scriptsize
	\centering
	\caption{Results of 2,000 game executions for random displacement and AMPC attacks with $h_{\mathit{max}}\,{=}\,5$ and $m\,{=}\,40$ (attacker runs for 20 steps)}
	\begin{tabular}{cccccccccccc}  
		\toprule
		Range of noise &&& Ctrl. success rate, \% &&& Avg. convergence duration &&& Avg. horizon\\
      	\midrule
        &&&\multicolumn{7}{c}{Random displacement game}\\
        \cmidrule(l){3-10}
        $[0,0.50]\times[0,2\pi]$ & & &$99.9$ & & &$3.33$ &&&$1.07$\\
        $[0,0.75]\times[0,2\pi]$ & & &$97.9$ & & &$3.61$ &&&$1.11$\\
        $[0,1.00]\times[0,2\pi]$ & & &$92.3$ & & &$4.14$ &&&$1.18$\\
 		\cmidrule(l){3-10}
        &&&\multicolumn{7}{c}{AMPC game}\\
        \cmidrule(l){3-10}
        $[0,0.50]\times[0,2\pi]$  && & $97.5$    &&& $4.29$ &&& $1.09$\\
        $[0,0.75]\times[0,2\pi]$  && & $63.4$    &&& $5.17$ &&& $1.23$\\
        $[0,1.00]\times[0,2\pi]$  && & $20.0$    &&& $7.30$ &&& $1.47$\\
		\bottomrule
\end{tabular}
\label{tab:resRandomNoise}
%\vspace*{-3ex}
\end{table}

\subsection{Discussion of the Results}
To demonstrate the resilience of our adaptive controller, for each game introduced in Section~\ref{sec:sgv}, we performed a number of experiments to estimate the probability of the controller winning.  Moreover, for the runs where the controller wins, the average number of steps required by the controller to bring the flock to a V-formation is reported as {\em{average convergence duration}}, and the average length of the horizon used by AMPC is reported as {\em{average horizon}}.

The numbering of the birds in Tables~\ref{tab:resRemoveOne} and~\ref{tab:resRemoveTwo} is given in Figure~\ref{fig:numbering}. Bird-removal scenarios that are symmetric with the ones in the tables are omitted. The results presented in Table~\ref{tab:resRemoveOne} are for the RBG game with $R\,{=}\,1$.
% The experiments clearly demonstrate the resilience of our adaptive controller. In the case where the attacker removes one bird, 
In this case, the controller is {\em{almost always}} able to bring the flock back to a V-formation, as is evident from Table~\ref{tab:resRemoveOne}. Note that removing Bird $1$ (or $7$) is a trivial case that results in a V-formation.
%The number of steps required to bring the flock back to a V-formation is reported as the {\em{convergence rate}}, and the average length of the horizon used by the AMPC is reported as {\em{average horizon}}.  

In the case when $R\,{=}\,2$, shown in Table~\ref{tab:resRemoveTwo}, the success rate of the controller depends on {\em{which two birds are removed}}. Naturally, there are cases where dropping two birds does not break the V-formation; for example, after dropping Birds~1 and~2, the remaining birds continue to be in a V-formation.  Such trivial cases are not shown in Table~\ref{tab:resRemoveTwo}. Note that the scenario of removing Bird~$1$ (or~$7$) and one other bird can be viewed as removing one bird in flock of $6$ birds, thus not considered in this table. Among the other nontrivial cases, the success rate of controller drops slightly in four cases, and drops drastically in remaining two cases. 
This suggests that attacker of a CPS system can incur more damage by being prudent in the choice of the attack. 
%when two birds are removed, but the adaptive controller is still able to reach a V-formation on $80$-$90$\% cases, as reported in Table~\ref{tab:resRemoveTwo}. 

Impressively, whenever the controller wins, the controller needs about the same number of steps to get back to V-formation (as in the one-bird removal case). On average, removal of two birds results in a configuration that has worse fitness compared to an RBG with $R\,{=}\,1$. %than a one-removed configuration. 
Hence, the adaptive controller is able to make bigger improvements (in each step) when challenged by worse configurations. Furthermore, among the four cases where the controller win rate is high, experimental results demonstrate that removing two birds positioned asymmetrically with respect to the leader poses a stronger, however, still manageable threat to the formation. For instance, the scenarios of removing birds~2 and~6 or 3 and~5 give the controller a significantly higher chance to recover from the attack, $98.6\%$ and $92.8\%$, respectively.

Table~\ref{tab:resRandomNoise} explores the effect of making the attacker smarter. Compared to an attacker that makes random changes in displacement, an attacker that uses AMPC to pick its action is able to win more often. This again shows that an attacker of a CPS system can improve its chances by cleverly choosing the attack. For example, the probability of success for the controller to recover drops from $92.3\%$ to $20.0\%$ when the attacker uses AMPC to pick displacements with magnitude in $[0,1]$ and direction in $[0,2\pi]$. The entries in the other two columns in Table~\ref{tab:resRandomNoise} reveal two even more interesting facts.

First, in the cases when the controller wins, we clearly see that the controller uses a longer look-ahead when facing a more challenging attack. This follows from the observation that the average horizon value increases with the strength of attack. This gives evidence for the fact that the adaptive component of our AMPC plays a pivotal role in providing resilience against sophisticated attacks.
Second, the average horizon still being in the range $1$-$1.5$, means that the adaptation in our AMPC procedure also helps it perform better than a fixed-horizon MPC procedure, where usually the horizon is fixed to $h\,{\geqslant}\,2$.
When a low value of $h$ (say $h\,{=}\,1$) suffices, the AMPC procedure avoids unnecessary calculation that using a fixed $h$ might incur.

In the cases where success rate was low (Row~5 in Table~\ref{tab:resRemoveTwo} and Row~6 in Table~\ref{tab:resRandomNoise}), we observed improved success rate ($9\%$ and $30.8\%$ respectively across $500$ runs) when we increased $h_{\mathit{max}}$ to~$10$ and $m$ to~$40$. This shows that success rate of AMPC improves as it is given more resources, as predicted by Theorem~\ref{thm:ampc}.

    \vspace*{-0.6ex}
\section{Related Work}
\label{sec:related}

In the field of CPS security, one of the most widely studied attacks is \emph{sensor spoofing}.  When sensors measurements are compromised, state estimation becomes challenging, which inspired a considerable amount of work on attack-resilient state estimation~\cite{DBLP:journals/tac/FawziTD14,Bullo13:TAC,Pajic14:ICCPS,Pajic15:ICCPS,UAVspoofing}.
In these approaches, resilience to attacks is typically achieved by assuming the presence of redundant sensors, or coding sensor outputs.
In our work, we do not consider sensor spoofing attacks, but assume the attacker gets control of the displacement vectors (for some of the birds/drones).  We have not explicitly stated the mechanism by which an attacker obtains this capability, but it is easy to envision ways (radio controller, attack via physical medium, or other channels~\cite{savage}) for doing so.

Adaptive control, and its special case of adaptive model predictive control, typically refers to the aspect of the controller updating its process model that it uses to compute the control action. The field of adaptive control is concerned with the discrepancy between the actual process and its model used by the controller. In our adaptive-horizon MPC, we adapt the lookahead horizon employed by the MPC, and not the model itself.  Hence, the work in this paper is orthogonal to what is done in adaptive control~\cite{adaptive_control,adaptive_mpc}.

A key focus in CPS security has also been detection of attacks. For example, recent work considers displacement-based attacks on formation flight~\cite{aiaa2016}, but it primarily concerned with detecting which UAV was attacked using an unknown-input-observer based approach. We are not concerned with detecting attacks, but establishing that the adaptive nature of our controller provides attack-resilience for free. Moreover, in our setting, for both the attacker the and controller the state of the plant is completely observable.

We are unaware of any work that uses statistical model checking to evaluate the resilience of adaptive controllers against (certain classes of) attacks.
    \vspace*{-0.6ex}
\section{Conclusions}
\label{sec:conclusion}

We have introduced AMPC, a new model-predictive controller that unlike MPC, comes with provable convergence guarantees. The key innovation of AMPC is that it dynamically adapts its receding horizon (RH) to get out of local minima. In each prediction step, AMPC calls
%% particle swarm optimizer
PSO with an optimal RH and corresponding number of particles. We used AMPC as a
bird-flocking controller whose goal is to achieve V-formation despite various forms of attacks, including bird-removal, bird-position-perturbation, and advanced AMPC-based attacks.  We quantified the resiliency of AMPC to such attacks using statistical model checking.  Our results show that AMPC is able to adapt to the severity of an attack by dynamically changing its horizon size and the number of particles used by PSO to completely recover from the attack, given a sufficiently long horizon and execution time (ET).  The intelligence of an attacker, however, makes a difference in the outcome of a game if RH and ET are bounded before the game begins.

Future work includes the consideration of additional forms of attacks, including: \emph{Energy attack}, when the flock is not traveling in a V-formation for a certain amount of time; \emph{Collisions}, when two birds are dangerously close to each other due to sensor spoofing or adversarial birds; and \emph{Heading change}, when the flock is diverted from its original destination (mission target) by a certain degree.

%2. adaptation helps to guarantee convergence, but here we observe that it can be used to obtain resilience to attacks.Effectively, if an optimization procedure can come out of local minima, it can be used o develop controllers that are resilient to attacks.
%3. stochastic model checking used to evaluate resilience of controllers to physical attacks on CPSs.
%4. attacker can employ sophisticated (adaptive) strategies (based on AMPC) to defeat controllers.
%5. Future work includes other atttacker strategies?  If so, which ones?  What else?
%Looking in the future, one of the most important for current CPS industry issue to address is attack detection and evaluation of the appropriate action plan, which we foresee to be possible to perform using our AMPC algorithm. This might require to consider distributed control of the birds. Moreover, in this work, due to the defined problem setting, we analyzed the resilience of our algorithm under the removal of one or two birds. In more general case with $B\,{>}\,7$ birds, we can use statistical approach to approximate the costs of removing $R\,{>}\,2$ birds form the V-formation.

%SECTIONS

\bibliographystyle{splncs03}
\bibliography{cavBib}

\end{document}